%% file: main.tex
\documentclass{article}
\input{preamble}\usepackage{capt-of}
\input{preprint}\title{CascadeEP: Asynchronous Expert Execution\\
for MoE Prefill under Attention Imbalance}
\input{authors}%
\input{commands}

\begin{document}
\maketitle
\authornotes
\input{sections/abstract}%


\input{sections/introduction}
\FloatBarrier
\input{sections/related_work}%

\input{sections/design}
\FloatBarrier
\input{sections/evaluation}%

\input{sections/conclusion}%

\input{sections/reproducibility_statement}%

\input{sections/ai_statement}%

\FloatBarrier
\bibliography{references}
\bibliographystyle{iclr2027_shortauthor}
\appendix
\clearpage
\renewcommand{\thefigure}{\thesection\arabic{figure}}%
\renewcommand{\theHfigure}{\thesection.\arabic{figure}}%
\renewcommand{\thetable}{\thesection\arabic{table}}%
\renewcommand{\theHtable}{\thesection.\arabic{table}}%
\counterwithin*{figure}{section}%
\counterwithin*{table}{section}%
\input{appendices/dep_prefill}%

\input{appendices/dsv4_attention}%

\input{appendices/runtime_implementation}%

\input{appendices/estimate_accuracy}%

\input{appendices/analysis}\end{document}

%% file: preamble.tex
\usepackage{iclr2027_conference,times}
\usepackage[T1]{fontenc}
\usepackage{amsmath,amssymb,amsthm}
\usepackage{algorithm,algpseudocode}
\usepackage{booktabs}
\usepackage{enumitem}
\usepackage{graphicx}
\usepackage{wrapfig,needspace}
\restylefloat{algorithm}
\usepackage{xcolor}
\usepackage{placeins}
\usepackage{hyperref}
\definecolor{figureref}{HTML}{15803D}
\definecolor{planningnote}{HTML}{1F4E79}
\hypersetup{colorlinks=true,allcolors=black,linkcolor=figureref}
\usepackage{url}

%% file: preprint.tex
\iclrfinalcopy
\renewcommand{\headrulewidth}{0pt}
\makeatletter
\renewcommand{\@maketitle}{\vbox{\hsize\textwidth
  \fancyhead{}
  {\centering\LARGE\bfseries\@title\par}
  \vskip 0.18in
  {\centering\normalfont\@author\par}
  \vskip 0.18in
}}
\makeatother
\hypersetup{
  pdftitle={CascadeEP: Asynchronous Expert Execution for MoE Prefill under Attention Imbalance},
  pdfauthor={Jin Qin, Tiancheng Hu, Shiyan Wang, Junhao Hu, Zexin Jian, Yuzheng Wang, Haoyu Li, Chunwei Xia, Ying Liu, Pixian Zhan, Di Wang, Zhongzhe Hu, Huimin Cui, Tao Xie, Chenxi Wang}
}

%% file: authors.tex
\author{Tiancheng Hu}
\author{%
\begin{minipage}{\textwidth}
\centering\small
Jin Qin\textsuperscript{1,2,*}\quad Tiancheng Hu\textsuperscript{3,*}\quad Shiyan Wang\textsuperscript{4}\quad Junhao Hu\textsuperscript{3}\quad Zexin Jian\textsuperscript{2}\\[3pt]
Yuzheng Wang\textsuperscript{3}\quad Haoyu Li\textsuperscript{1}\quad Chunwei Xia\textsuperscript{5}\quad Ying Liu\textsuperscript{2}\quad Pixian Zhan\textsuperscript{6}\\[3pt]
Di Wang\textsuperscript{3}\quad Zhongzhe Hu\textsuperscript{7}\quad Huimin Cui\textsuperscript{2}\quad Tao Xie\textsuperscript{3, 8,\textdagger}\quad Chenxi Wang\textsuperscript{2,\textdagger}\\[7pt]
\footnotesize
\textsuperscript{1}University of the Chinese Academy of Sciences\\[2pt]
\textsuperscript{2}Institute of Computing Technology, Chinese Academy of Sciences\\[2pt]
\textsuperscript{3}Peking University\quad \textsuperscript{4}Beijing University of Posts and Telecommunications\\[2pt]
\textsuperscript{5}University of Leeds\quad \textsuperscript{6}Advanced Institute of Information Technology\quad \textsuperscript{7}Huawei Technologies Ltd.\\[2pt]
\textsuperscript{8}Beijing Tongming Lake Information Technology Application Innovation Center
\end{minipage}%
}
\newcommand{\authornotes}{%
  \begingroup
  \renewcommand{\thefootnote}{\fnsymbol{footnote}}%
  \footnotetext[1]{Equal contribution.}%
  \footnotetext[2]{Corresponding authors.}%
  \endgroup
}

%% file: commands.tex
\newcommand{\figref}[2][]{\textcolor{figureref}{\ref{#2}#1}}
\newcommand{\secref}[2][]{\textcolor{figureref}{\S\ref{#2}#1}}
\newcommand{\name}{\textsc{CascadeEP}}
\newtheorem{proposition}{Proposition}

%% file: sections/abstract.tex
\begin{abstract}
Mixture-of-experts (MoE) serving commonly deploys data and expert parallelism (DEP): attention replicas run distinct request batches while routed experts are sharded across an expert-parallel (EP) group. During prefill, attention replicas finish dispatch at different times, but synchronous EP delays expert feed-forward network (FFN) computation until routed inputs from all replicas are ready. Request schedulers seek to balance load while reusing the key-value (KV) cache of shared prompt prefixes to avoid redundant prefill computation. These goals can conflict when a replica holding a matching prefix is already overloaded, leaving residual attention imbalance. We present \name{}, a distributed execution engine for MoE prefill. \name{} proposes three mechanisms. Asynchronous EP allows expert computation to start before tokens from all attention replicas are ready. \textit{streamFFN} batches ready tokens to balance early execution with FFN computation efficiency. Opportunistic expert weight fetching (\textit{OEWF}) allows a faster replica to fetch expert weights and execute unstarted work from other GPUs. We evaluate \name{} on DeepSeek-V4-Flash, DeepSeek-V4-Pro, and GLM-5.3, and our results show that \name{} achieves up to  1.48$\times$ speedup in p95 time-to-first-token (TTFT) and improves the inference throughput by up to 1.17$\times$.
\end{abstract}

%% file: sections/introduction.tex
\section{Introduction}
\label{sec:introduction}

Mixture-of-experts (MoE) has become a mainstream architecture for scaling large language models, as exemplified by DeepSeek-V4 and Kimi K3 \citep{deepseek2026v4,kimiteam2026kimik3openfrontier}, which expand parameter capacity while activating only a subset of experts per token. For large-scale MoE serving, a widely adopted deployment strategy is data and expert parallelism (DEP): attention uses data parallelism (DP), while routed experts use expert parallelism (EP) \citep{vllm2026agentx}. Attention replicas process distinct request batches and maintain the associated key-value (KV) caches, while routed experts are distributed across GPUs within an EP group. Token dispatch transfers token activations---the inputs to routed experts---to the GPUs hosting the selected experts for feed-forward network (FFN) computation. Token combine returns expert outputs to the originating attention replicas and aggregates them \citep{deepep2025}. Batching routed tokens from multiple attention replicas can increase per-expert batch sizes and improve expert GEMM efficiency \citep{deepseek2025inference}. This deployment can also integrate other parallelism techniques, such as tensor parallelism (TP) within each attention replica \citep{zhu2025megascaleinfer}.

\textbf{DP attention imbalance in MoE prefill.} 
During MoE prefill, attention replicas process different numbers of requests with varying context lengths, causing them to complete attention computation and reach EP dispatch at different times \citep{deepseek2025inference,zhu2025megascaleinfer}. Existing DEP implementations typically use synchronous EP, in which each GPU starts FFN computation only after all attention replicas have finished dispatching their tokens \citep{deepep2025,moonep2026}. Consequently, replicas that finish attention early remain idle before FFN computation while waiting for other replicas to reach dispatch, as illustrated in Figure~\figref[(a)]{fig:overview}. In our DeepSeek-V4-Flash \citep{deepseek2026v4} prefill experiment on the AgentX dataset \citep{semianalysis2026agentx} with DP8+EP8 in SGLang \citep{zheng2024sglang}, replicas spend an average of 28.86\% of end-to-end runtime idle while waiting for other replicas to reach dispatch (Appendix~\ref{app:dep-prefill}).

\begin{figure}[]
\centering
\includegraphics[width=\linewidth]{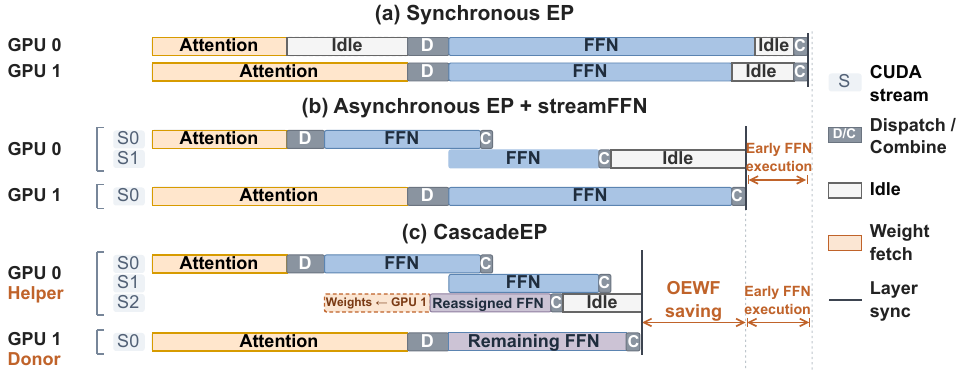}
\caption{Illustrative MoE prefill timelines for two representative GPU ranks. (a) Synchronous EP synchronizes all replicas before starting FFN computation. (b) Asynchronous EP with \textit{streamFFN} overlaps independent FFN GEMM kernels on multiple CUDA streams. (c) \name{} further allows GPU~0 to fetch weights and execute reassigned FFN work on a dedicated helper stream. }
\label{fig:overview}
\end{figure}


\Needspace{28\baselineskip}
\begin{wrapfigure}{r}{0.49\textwidth}
\centering
\includegraphics[width=\linewidth]{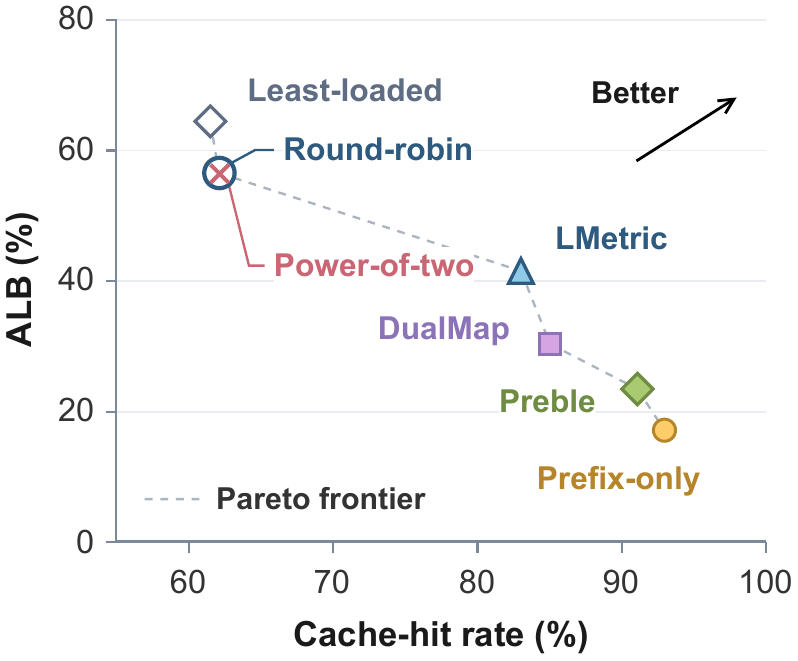}
\caption{Cache reuse vs. ALB for state-of-the-art scheduling policies, evaluated with SGLang serving DeepSeek-V4-Flash during prefill on the AgentX dataset. ALB is computed for each layer as the mean attention-stage duration across DP replicas divided by the maximum.}
\label{fig:motivation}
\end{wrapfigure}

\textbf{Limits of request scheduling.}
Existing request scheduling policies balance workloads across serving instances while reusing KV caches for matched request prefixes \citep{srivatsa2025preble,yuan2026dualmap,zhang2026simple}, which can also be adapted to DEP's attention replicas. However, request scheduling policies alone cannot eliminate the imbalance in DP attention workloads for two reasons. First, cache reuse constrains request scheduling: a replica holding a matched prefix may be overloaded, whereas choosing a less loaded replica without the matched prefix may require KV recomputation or additional transfer \citep{qin2025mooncake}. This creates a trade-off between cache-hit rate and attention load balance (ALB), as shown in Figure~\figref{fig:motivation}. Second, even setting aside cache-locality constraints, attention-stage execution time depends on input sequence lengths and batch composition \citep{zhong2024distserve,zhu2025megascaleinfer}. Unpredictable online arrivals make it difficult to consistently balance attention-stage execution times across replicas (see Appendix~\ref{app:profiles}).

\textbf{Design principles: asynchronous EP with expert weight fetching.} To address DP attention imbalance that persists under state-of-the-art request scheduling policies, we propose \name{}, a distributed execution engine for MoE prefill that combines asynchronous expert execution with expert weight fetching. \name{} allows FFN GEMM kernels to be launched before all routed expert tokens have arrived and to return output asynchronously (Figure~\figref[(b)]{fig:overview}). \name{} also enables replicas that have completed dispatch to fetch expert weights from those still computing attention and take over part of their pending FFN work. This reduces the remaining FFN workload on replicas with longer attention times, helping mitigate the layer-wise straggler effect (Figure~\figref[(c)]{fig:overview}).

\textbf{Challenge 1: preserving FFN computation efficiency.} 
Under synchronous EP, each GPU waits for tokens from all attention replicas before launching FFN GEMM kernels. Asynchronous EP allows FFN to start on ready expert tokens but splits a complete GEMM kernel into smaller ones, reducing batch sizes and potentially degrading computation efficiency. To preserve FFN computation efficiency, we introduce \textit{streamFFN}, guided by a key observation: as the number of expert tokens increases, effective FFN throughput generally rises rapidly at first, then improves more gradually before approaching a plateau. Accordingly, \textit{streamFFN} accumulates ready expert tokens until their count reaches a launch threshold. Meanwhile, \textit{streamFFN} schedules independent FFN GEMM kernels for different input groups on multiple CUDA streams, allowing concurrent kernels to use GPU resources left idle by partially occupied final thread-block waves \citep{nvidia2026matmul,nvidia2026concurrency}.


\textbf{Challenge 2: utilizing idle GPU time.} 
Although asynchronous EP allows replicas that finish attention early to complete their FFNs sooner, the next attention layer cannot begin until all replicas have received their required expert outputs. This layer-wise barrier leaves earlier-finishing replicas idle while waiting for the slowest. To exploit this idle time, we introduce \emph{Opportunistic Expert Weight Fetching} (\textit{OEWF}), which enables replicas that have completed token dispatch to asynchronously fetch expert weights from those still computing attention and take over part of their unstarted FFN work. \textit{OEWF} first uses a coordinated greedy algorithm to select the source replicas and expert weights to fetch. Once the weights are ready, a cost model compares the completion times of the current and revised FFN execution plans to determine whether to reassign FFN work.

\textbf{Results.} We evaluate \name{} on DeepSeek-V4-Flash, DeepSeek-V4-Pro \citep{deepseek2026v4}, and GLM-5.3 \citep{zai2026glm53} using Tool\&Agent \citep{qin2025mooncake} and AgentX under five request-scheduling policies.
End-to-end experiments show that \name{} achieves 1.13$\times$ speedup in p95 time-to-first-token (TTFT) on average (up to 1.48$\times$) and improves inference throughput by 1.06$\times$ on average (up to 1.17$\times$).

\noindent Our contributions in this paper are as follows:
\begin{itemize}[leftmargin=*,labelindent=0pt,itemindent=0pt,listparindent=0pt]
\item We characterize DP attention imbalance in MoE prefill and show how synchronous EP converts uneven attention completion times into GPU idle time.
\item We propose \name{}, a distributed MoE prefill engine that combines asynchronous EP, \textit{streamFFN}, and \textit{OEWF} to execute ready tokens efficiently and reassign FFN work to idle GPUs.
\item We comprehensively evaluate \name{} across three models, two real-world workloads, and five request-placement policies. Our experiments show \name{} achieves significant end-to-end improvement compared with baselines.
\end{itemize}

%% file: sections/related_work.tex
\section{Related Work}
\label{sec:relatedwork}
\paragraph{EP communication optimizations.}
In large-scale DEP serving, dispatch and combine incur substantial communication overhead from large data volumes and uneven traffic across GPUs. Existing works mitigate this overhead through two complementary approaches. \emph{Communication--computation overlap} pipelines token transfers with expert computation: DeepEP provides specialized dispatch and combine kernels with asynchronous interfaces \citep{deepep2025}, while Comet schedules communication and GEMM execution at tile granularity within fused GPU kernels \citep{zhang2025comet}. \emph{Fused MoE execution} reduces kernel-launch and orchestration overhead: FlashMoE integrates dispatch, expert computation, and combine into a single persistent GPU kernel, pipelining them via device-initiated communication and task scheduling \citep{aimuyo2025flashmoe}. These operator optimizations, together with expert placement and load balancing \citep{li2023lina,yang2026libra}, target communication and expert execution within the MoE layer. They are largely orthogonal to our focus on synchronization stalls caused by DP attention imbalance.

\paragraph{Request scheduling for DP load balancing.}
Round-robin assigns requests cyclically across instances \citep{sglang2026dp}, while prefix-only routing selects the instance with the longest matching cached prefix \citep{srivatsa2025preble}. Power-of-two-choices selects the less loaded of two sampled instances \citep{mitzenmacher2001poweroftwo}, whereas least-loaded routing selects the least loaded overall \citep{zhang2026simple}. LMetric minimizes the product of the cache-aware queued prefill-token count after assignment and the current batch size \citep{zhang2026simple}. Preble's global E2 scheduler combines prefix reuse with load-aware exploration based on computation and KV-cache eviction costs \citep{srivatsa2025preble}. DualMap uses independent prefix hashes to identify two candidates, then applies SLO-aware routing and hotspot rebalancing \citep{yuan2026dualmap}.

%% file: sections/design.tex
\section{\name{} Design}
\label{sec:design}
\subsection{Overview}
\label{sec:execution-model}
\begin{figure}[!t]
\centering
\includegraphics[width=0.9\linewidth]{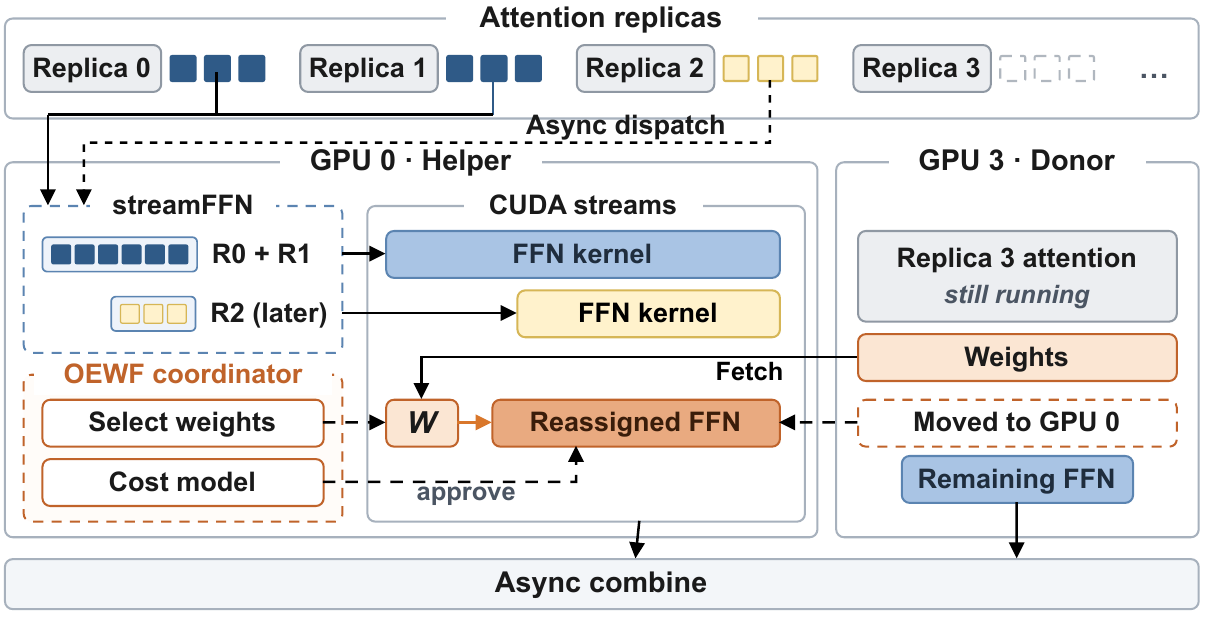}
\caption{\name{} system overview.}
\label{fig:system}
\end{figure}

\name{} is a distributed execution engine for MoE prefill. Figure~\figref{fig:system} shows its main components: asynchronous dispatch and combine for communication, \textit{streamFFN} (\secref{sec:sgffn}) for FFN GEMM kernel batching and scheduling, and opportunistic expert weight fetching (\textit{OEWF}, \secref{sec:swf}) for fetching expert weights and reassigning FFN work.
Each replica dispatches tokens independently after completing its attention stage, and \textit{streamFFN} batches ready tokens for FFN execution. A replica that has completed token dispatch can act as a \emph{helper}. For each helper, the coordinator selects a replica still computing attention as its \emph{donor} and determines which expert weights to fetch. After the helper finishes fetching the selected experts' full weights, the coordinator uses a cost model to decide whether to reassign the corresponding FFN work from the donor to the helper.

\subsection{Asynchronous EP and streaming FFN}
\label{sec:sgffn}

\name{} enables FFN to begin before all replicas complete token dispatch. Our extension to DeepEP \citep{deepep2025} exposes per-replica completion signals at each destination, indicating when an attention replica's routed expert tokens have arrived and are ready for computation. These tokens become eligible for \textit{streamFFN} scheduling without waiting for other replicas. Appendix~\ref{app:runtime-impl} details our implementation in SGLang \citep{zheng2024sglang}.

However, launching a separate FFN GEMM kernel whenever a replica's routed tokens become ready can fragment the workload into small per-expert batches and reduce computation efficiency. \textit{streamFFN} balances early execution with FFN efficiency through two mechanisms: threshold-based batching to form larger FFN GEMM kernels from ready expert tokens, and multi-stream execution to overlap independent kernels and utilize otherwise idle GPU resources.

\paragraph{Launch threshold.}
Our key observation is that FFN throughput rises rapidly at low expert-token counts but gradually saturates as the token count increases (Figure~\figref{fig:ffn-profile}). This motivates starting FFN computation at near-peak efficiency without waiting for all tokens to arrive. Accordingly, \textit{streamFFN} sets the launch threshold $\theta$ to the smallest profiled expert-token count that achieves at least a target fraction $\rho$ of the observed peak throughput. A larger $\rho$ favors computational efficiency, whereas a smaller $\rho$ enables earlier launches. The threshold is calibrated for each model--GPU configuration to account for expert shapes, compute precision, and kernel tiling. Once all tokens have arrived, \textit{streamFFN} bypasses the threshold and processes the remaining expert tokens.

\paragraph{Multi-stream execution.}
A launch threshold improves FFN computation efficiency but does not eliminate \emph{wave quantization}, also known as the \emph{tail effect} \citep{nvidia2026matmul}. The number of blocks in a full wave equals the SM count multiplied by the kernel's resident-block capacity per SM. When a kernel's remaining block count is not a multiple of this capacity, its final wave is only partially occupied. \textit{streamFFN} schedules independent FFN GEMM kernels for different input groups on separate CUDA streams, allowing eligible kernels to use GPU resources left idle by another kernel's final wave. Our experiments show that \textit{streamFFN} incurs about a 2--3\% loss in effective FFN throughput relative to synchronous execution, while substantially advancing FFN completion through earlier launches, as detailed in \secref{sec:ffn-efficiency}.

\begin{figure}[!t]
\centering
\begin{minipage}[t]{0.46\linewidth}
\vspace{0pt}
\centering
\includegraphics[width=\linewidth]{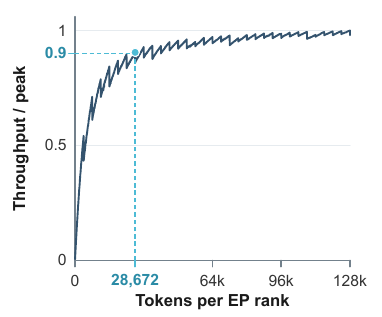}
\caption{FFN throughput of DeepSeek-V4-Flash with EP8 using DeepGEMM on B200, normalized to its observed peak.}
\label{fig:ffn-profile}
\end{minipage}\hfill
\begin{minipage}[t]{0.52\linewidth}
\vspace{0pt}
\setlength{\intextsep}{0pt}
\begin{algorithm}[H]
\small
\caption{Greedy expert-weight selection}
\label{alg:weight-fetch}
\begin{algorithmic}[1]
\Procedure{SelectWeights}{$h$}
  \State $\mathcal D\gets\Call{CandidateDonors}{S,h}$
  \State \textbf{if} $\mathcal D=\varnothing$ \textbf{then} \Return
  \State $d^*\gets\arg\max_{d\in\mathcal D}T_{d}(S)$
  \State $\mathcal L\gets\operatorname{sort}^{\downarrow}_{e\in\mathcal E_{d^*}}\bigl(|\mathcal T_{h{d^*}e}|/b_e\bigr)$
  \State $\mathcal E_h\gets\varnothing$; $R\gets B_h$
  \For{$e\in\mathcal L$}
    \If{$b_e\le R$}
      \State $\mathcal E_h\gets\mathcal E_h\cup\{e\}$; $R\gets R-b_e$
    \Else
      \State \textbf{break}
    \EndIf
  \EndFor
  \If{$\mathcal E_h\ne\varnothing$}
    \State $P_h\gets(h,d^*,\mathcal E_h,\{\mathcal T_{h{d^*}e}\}_{e\in\mathcal E_h})$
    \State $\Call{RecordFetch}{P_h}$
    \State $\Call{AsyncFetch}{P_h}$
  \EndIf
\EndProcedure
\end{algorithmic}
\end{algorithm}
\end{minipage}
\end{figure}

\Needspace{5\baselineskip}
\subsection{Opportunistic Expert Weight Fetching}
\label{sec:swf}
Although asynchronous EP allows replicas that finish attention early to complete their FFN computation sooner, the next attention layer cannot begin until all replicas have received their required expert outputs. This layer-wise barrier leaves earlier-finishing replicas idle while waiting for the slowest. \textit{OEWF} exploits this idle GPU time to help slower replicas execute their unstarted FFN work. We call a replica that has completed its token dispatch a \emph{helper}, while a slower replica still computing attention is a candidate \emph{donor}, from which the helper may fetch expert weights. Each replica will make two decisions: a coordinated greedy algorithm selects a donor and the expert weights to fetch, and once the weights are ready, a cost model decides whether the helper should take over the corresponding FFN work.

\paragraph{Coordinated greedy expert-weight selection.}
Starting from the second MoE layer, each helper makes one fetch decision immediately after dispatch. Candidate donors are replicas still computing attention, and $\mathcal D$ denotes this candidate set.

We first estimate the remaining attention time. For donor replica $d \in\mathcal D$ in the current MoE layer, let $I_d$ and $P_d$ denote its current input lengths and matched prefix lengths, and let $t_d^{\rm att}$ be the elapsed attention time. The remaining attention-stage time is estimated as
\begin{equation}
 A_{d}
= \bigl[f(I_d, P_d)
- t_{d}^{\mathrm{att}}\bigr].
\label{eq:attention-remaining}
\end{equation}
where $f$ predicts the complete attention-stage duration from the input lengths (Appendix~\ref{app:profiles}).

Second, we model dispatch and FFN GEMM kernel latencies. Let $D_{d}$ denote the dispatch latency for rank $d$ in the current layer, which is calculated by dividing the dispatched-token count, determined from the input lengths, by the profiled effective dispatch throughput. We estimate the received expert-token counts $N_{d}$ from the preceding layer's per-rank token count and model FFN latency as
\begin{equation}
F(N_{d}) = \frac{N_{d}}{\eta(N_{d})P_{\mathrm{peak}}}
\label{eq:ffn-throughput-latency}
\end{equation}
where $\eta(N_{d})\in(0,1]$ is the normalized effective FFN throughput and $P_{\mathrm{peak}}$ is the corresponding profile's observed peak throughput.

Finally, we combine these estimates under the current execution plan $S$, and the predicted completion time of donor replica $d$ is
\begin{equation}
T_{d}(S)
= A_{d} + D_{d}
+ F(N_{d})
\label{eq:donor-completion}
\end{equation}

The coordinator selects the donor with the latest completion time:
\begin{equation}
d^*=\arg\max_{d\in\mathcal D}T_{d}(S)
\label{eq:donor-selection}
\end{equation}

For the selected donor $d^*$, let $\mathcal T_{h{d^*}e}$ denote the set of tokens dispatched by helper $h$ to expert $e$, and let $b_e$ be the number of bytes required to fetch that expert's weights. Experts in the eligible set $\mathcal E_{d^*}=\{e:\mathcal T_{h{d^*}e}\ne\varnothing\}$ are sorted in descending order of $|\mathcal T_{h{d^*}e}|/b_e$. The per-helper byte budget $B_h$ is capped by the available capacity of the helper's temporary weight buffer. Starting with $R=B_h$, the coordinator adds each expert to the selected set $\mathcal E_h$ if $b_e\le R$, then reduces $R$ by $b_e$, until adding the next expert would exceed the capacity limit. The coordinator records each fetch plan $P_h$ and its resource usage to update the FFN execution plan, keeping expert selection consistent with a shared global view across all replicas. Finally, the weight transfers are performed asynchronously without interfering with the helper's computation. Algorithm~\ref{alg:weight-fetch} details the procedure.

\paragraph{Cost-based FFN reassignment.}
After helper $h$ finishes fetching the selected weights, the coordinator compares the current plan $S$ with a revised plan $S'$ that transfers $N_{trans}$ unstarted expert tokens from donor $d$ to the helper $h$. The transferred work executes in a separate FFN GEMM kernel on another CUDA stream and can overlap with the helper's remaining FFN work. The completion times of the revised plan are
\begin{equation}
\begin{gathered}
T_d(S')
= A_d + D_d + F\!\left(N_d-N_{\mathrm{trans}}\right)\\
T_h(S')
= T_h(S) + \alpha_h F(N_{\mathrm{trans}})
\end{gathered}
\label{eq:swf-completion-times}
\end{equation}
Here, $\alpha_h\in[0,1]$ is the fraction of the reassigned FFN latency that extends beyond that time. The coordinator assigns $\alpha_h$ from the candidate CUDA-stream schedule under the current plan $S$: $\alpha_h=1$ corresponds to serial execution, while $\alpha_h=0$ means that the reassigned work is fully hidden within the helper's remaining execution. The revised plan is adopted only if
\begin{equation}
\max\{T_d(S'),T_h(S')\}
\le
\max\{T_d(S),T_h(S)\}.
\label{eq:swf-admission}
\end{equation}

Under our cost-model and execution assumptions, we prove that applying \textit{OEWF}'s policy across all replicas guarantees $M_{\rm OEWF}\le M_{\rm no\text{-}fetch}$, where $M$ denotes layer completion time (Appendix~\ref{app:fetch-proof}). Its practical effectiveness depends on these predictions. The complete attention-stage duration and the received expert-token count have accuracies of 90.3\% and 85.9\%, respectively (Appendix~\ref{app:estimate-accuracy}). Section~\ref{sec:oewf-layer} further evaluates its layer-level benefits in real model deployment.

%% file: sections/evaluation.tex
\section{Evaluation Results}
\label{sec:evaluation}

\begin{figure}[!t]
\centering
\begin{minipage}[t]{0.34\linewidth}
    \vspace{0pt}
    \centering
    \includegraphics[width=\linewidth]{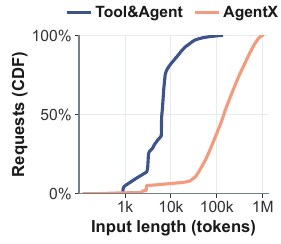}
    \caption{Input-length CDFs of two datasets.}
    \label{fig:workload-lengths}
\end{minipage}\hfill
\begin{minipage}[t]{0.64\linewidth}
    \vspace{0pt}
    \centering
    \footnotesize
    \setlength{\tabcolsep}{1.1pt}
    \renewcommand{\arraystretch}{1.35}

    \begin{minipage}[t][120.8pt][t]{\linewidth}
        \centering
        \begin{tabular}{@{}lccc@{}}
            \toprule
             & \shortstack{\textbf{DeepSeek-V4-}\\\textbf{Flash}}
             & \shortstack{\textbf{DeepSeek-V4-}\\\textbf{Pro}}
             & \raisebox{5.5pt}{\textbf{GLM-5.3}} \\
            \midrule
            \textbf{Total / active parameters}
                & 284B / 13B & 1.6T / 49B & 753B / ${\sim}$40B \\
            \textbf{Attention}
                & CSA and HCA & CSA and HCA & DSA \\
            \textbf{Routed experts / layer}
                & 256 & 384 & 256 \\
            \textbf{Shared experts / layer}
                & 1 & 1 & 1 \\
            \textbf{Routed experts / token}
                & 6 & 6 & 8 \\
            \textbf{Expert intermediate dim.}
                & 2,048 & 3,072 & 2,048 \\
            \bottomrule
        \end{tabular}
    \end{minipage}

    \captionof{table}{Model architecture details.
    GLM-5.3 parameter counts follow \citet{nvidia2026glm53}.}
    \label{tab:eval-models}
\end{minipage}
\end{figure}

\subsection{Evaluation Settings}
\label{sec:setup}
\paragraph{Workloads.}
We evaluate two workloads separately. Mooncake's Tool\&Agent trace captures production tool and agent requests characterized by long, repeated system prompts \citep{qin2025mooncake}. The AgentX trace corpus contains multi-turn coding-agent sessions, including both main-session and subagent requests \citep{semianalysis2026agentx}. For each workload, we scale the original interarrival times to match the target request rate. For each model and target rate, all baselines replay the same request trace under the same KV-cache budget. Figure~\figref{fig:workload-lengths} shows the input-length distributions.

\paragraph{Models and environments.}
We evaluate \name{} on DeepSeek-V4-Flash, DeepSeek-V4-Pro \citep{deepseek2026v4}, and GLM-5.3 \citep{zai2026glm53}. For all three models, we use prefill--decode (PD) disaggregation across two nodes and evaluate only the prefill instance. The prefill node contains eight NVIDIA B200 GPUs interconnected via NVLink, with eight-way data parallelism for attention and eight-way expert parallelism for routed experts, placing one attention replica and one EP rank on each GPU. Attention uses FP8 for all three models; routed-expert weights use MXFP4 for DeepSeek-V4-Flash and DeepSeek-V4-Pro, and FP8 for GLM-5.3. We set \textit{streamFFN}'s throughput target to $\rho=0.90$ for all three models. Table~\ref{tab:eval-models} summarizes the model architectures.


\begin{figure}[]
\centering
\includegraphics[width=\linewidth]{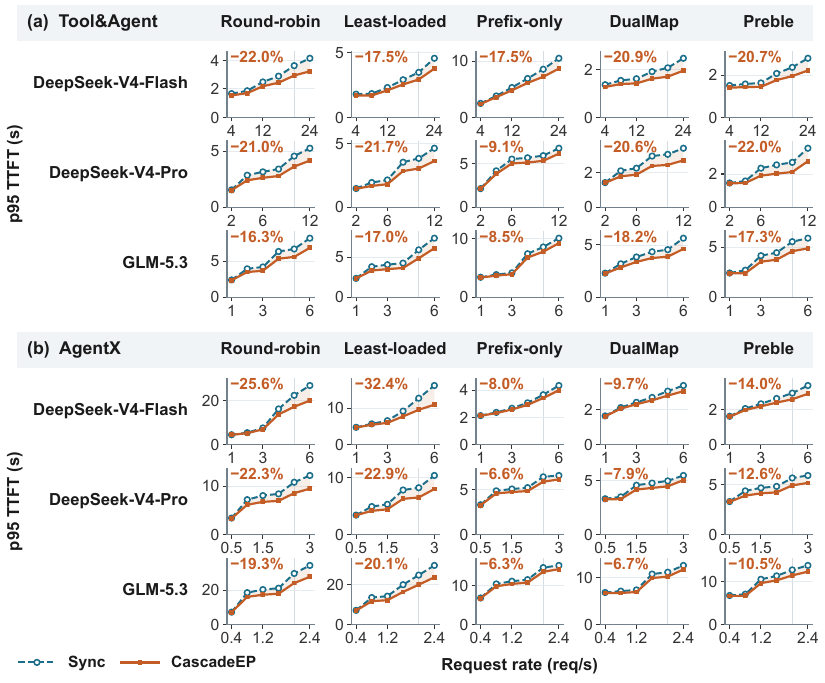}
\caption{Prefill p95 TTFT across models, workloads, and request scheduling policies. Percentages denote reductions at the highest request rates.}
\label{fig:e2e}
\end{figure}

\paragraph{Baseline and placement policies.}
Our baseline, \textsc{Sync}, uses synchronous SGLang DEP deployment with DeepEP for dispatch and combine \citep{deepep2025} and DeepGEMM for FFN GEMM computation. FFN starts after dispatch completes on all replicas (Figure~\figref[(a)]{fig:overview}). We compare \textsc{Sync} and \name{} under five request-scheduling policies: round-robin (SGLang default), least-loaded, prefix-only, DualMap, and Preble (\secref{sec:relatedwork}) \citep{sglang2026dp,zhang2026simple,srivatsa2025preble,yuan2026dualmap}. We implement these policies in the SGLang router to assign requests across DP attention replicas within the prefill instance. For Preble, we use a prefill-oriented adaptation of its global E2 policy. We use round-robin placement for the ablation experiments.

\paragraph{Metrics.}
We report p95 time-to-first-token (TTFT) across request rates. We also report prompt-token throughput at selected high-load request rates near saturation. Each point in the end-to-end and component-ablation results is the arithmetic mean of the corresponding per-run metric over five independent runs. 

\begin{figure}[]
\centering
\includegraphics[width=\linewidth]{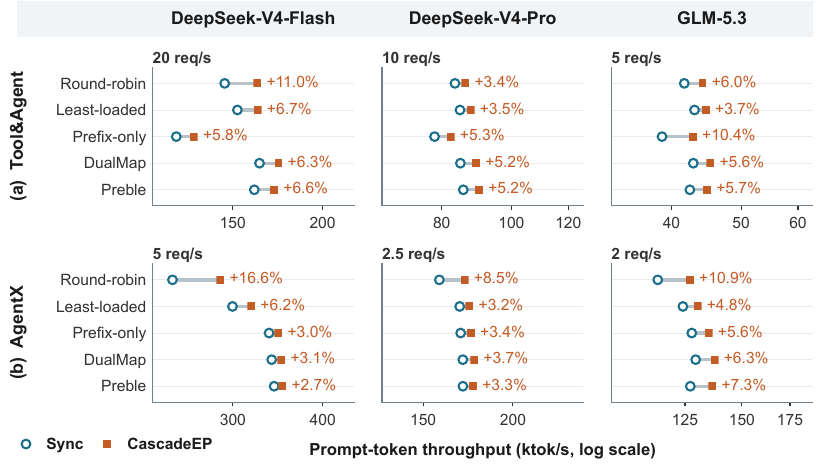}
\caption{Prompt-token throughput across models and workloads.}
\label{fig:throughput}
\end{figure}

\subsection{End-to-end prefill performance}
\label{sec:e2e}
Figure~\figref{fig:e2e} compares the p95 TTFT of \textsc{Sync} and \name{} across request rates for three models, two workloads, and five request-scheduling policies. At the highest request rate in each panel, \name{} achieves an average p95 TTFT speedup of 1.21$\times$ and a maximum of 1.48$\times$ over \textsc{Sync}. The largest reduction, 32.4\%, occurs on DeepSeek-V4-Flash with AgentX under least-loaded scheduling at 6~req/s. On AgentX, reductions at the highest tested rates average 22.4\% and 25.1\% across the three models under round-robin and least-loaded scheduling, respectively, compared with 7.0\%, 8.1\%, and 12.4\% under prefix-only, DualMap, and Preble. Although the gains are smaller under the latter policies, the improvements under DualMap and Preble demonstrate that intra-layer execution optimization complements cache- and load-aware request scheduling.

Figure~\figref{fig:throughput} compares prompt-token throughput at selected high-load request rates. \name{} improves throughput across all three models, both workloads, and all five policies, with an average increase of 6.0\% and a maximum of 16.6\% (1.06$\times$ and 1.17$\times$ the baseline throughput, respectively). The largest gain occurs on DeepSeek-V4-Flash with AgentX under round-robin scheduling at 5~req/s. Averaged across workloads and policies, throughput increases by 6.8\%, 4.5\%, and 6.6\% for DeepSeek-V4-Flash, DeepSeek-V4-Pro, and GLM-5.3, respectively, showing consistent throughput improvements alongside the TTFT reductions.

\subsection{Deep Dive}
\label{sec:ablation}
\paragraph{Ablation study.}
We compare four variants on DeepSeek-V4-Flash using AgentX at 5~req/s under round-robin placement, keeping all other settings fixed. \textsc{Sync} uses synchronous DEP execution. \textsc{Async} executes each source replica's ready expert inputs in a separate FFN GEMM kernel on a single compute stream, without a launch threshold. \textsc{Async+SF} adds \textit{streamFFN}'s launch threshold and multi-stream execution. \name{} further enables \textit{OEWF} to fetch expert weights and reassign eligible unstarted FFN work.

Figure~\figref{fig:ablation} shows the incremental benefits of these mechanisms. \textsc{Async} starts FFN computation earlier than \textsc{Sync}, but small FFN GEMM kernels limit its gains. Adding \textit{streamFFN} improves computation efficiency through threshold-based batching and multi-stream execution and yields the largest incremental TTFT reduction. The matched-work comparison below examines this efficiency improvement separately. On top of \textsc{Async+SF}, \textit{OEWF} uses idle GPU time to execute reassigned, unstarted FFN work, further reducing average and p95 TTFT by 7.8\% and 6.2\%, respectively.

\begin{figure}[!htbp]
\centering
\begin{minipage}[t]{0.48\linewidth}
\vspace{0pt}
\centering
\includegraphics[width=\linewidth]{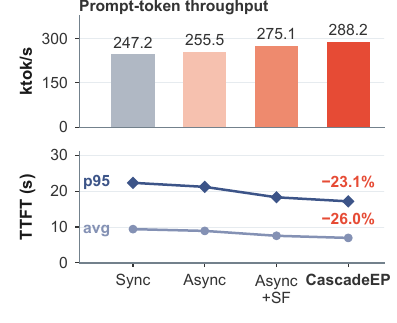}
\caption{Incremental ablation of \name{}. Top: prompt-token throughput. Bottom: average and p95 TTFT, annotated with the reduction of \name{} over \textsc{Sync}.}
\label{fig:ablation}
\end{minipage}\hfill
\begin{minipage}[t]{0.48\linewidth}
\vspace{0pt}
\centering
\includegraphics[width=\linewidth]{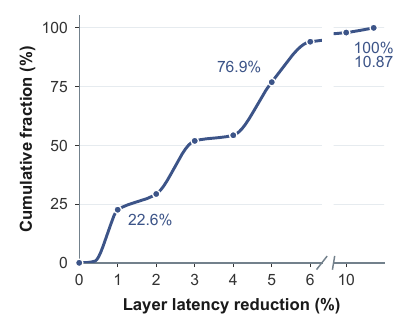}
\caption{CDF of per-layer latency reduction from enabling \textit{OEWF} for DeepSeek-V4-Flash on AgentX, with \textit{streamFFN} enabled in both configurations.}
\label{fig:oewf-layer}
\end{minipage}
\end{figure}

\paragraph{FFN computation efficiency.}
\label{sec:ffn-efficiency}
To isolate FFN computation efficiency from the benefits of early execution, we compare \textit{streamFFN} with synchronous EP's complete FFN GEMM kernels under identical expert weights and per-expert token counts. On DeepSeek-V4-Flash with AgentX, \textit{streamFFN} retains 97.3\% of synchronous EP's effective FFN throughput, incurring only a 2.7\% loss. The main reason is that the launch threshold preserves near-peak GEMM efficiency by avoiding overly small kernels, while multi-stream execution reduces GPU underutilization caused by the tail effect.

\paragraph{Layer-level acceleration.}
\label{sec:oewf-layer}
We analyze \textit{OEWF} by comparing execution with and without it for DeepSeek-V4-Flash on AgentX, with \textit{streamFFN} enabled in both configurations. Figure~\figref{fig:oewf-layer} shows that \textit{OEWF} reduces latency by at least 1\% in 77.4\% of layer forwards and by at least 5\% in 23.1\%, with a maximum observed reduction of 10.87\%. The main reason is that \textit{OEWF} uses otherwise idle GPUs to execute part of the FFN work assigned to replicas with longer attention times, reducing their remaining workload and shortening the layer tail.

%% file: sections/conclusion.tex
\Needspace{11\baselineskip}
\section{Conclusion}
This paper presented \name{}, an intra-layer execution engine that mitigates the impact of attention imbalance on MoE prefill. Asynchronous EP starts expert computation before tokens from all attention replicas arrive. \textit{streamFFN} combines kernel launch threshold and multi-stream execution to balance early launches with FFN efficiency, while \textit{OEWF} fetches expert weights and reassigns unstarted FFN work to use GPU idle time. We evaluate \name{} on DeepSeek-V4-Flash, DeepSeek-V4-Pro \citep{deepseek2026v4}, and GLM-5.3 \citep{zai2026glm53} using Tool\&Agent \citep{qin2025mooncake} and AgentX under five request-scheduling policies. The results show that \name{} achieves 1.13$\times$ speedup in p95 time-to-first-token (TTFT) on average (up to 1.48$\times$) and improves inference throughput by 1.06$\times$ on average (up to 1.17$\times$). These results show that intra-layer asynchronous execution complements request scheduling in improving MoE prefill performance.

%% file: sections/reproducibility_statement.tex
\subsection*{Reproducibility Statement}
Section~\ref{sec:evaluation} describes the implementation, experimental setup, workload replay procedure, baselines, and evaluation metrics. Appendix~\ref{app:dep-prefill} defines the timing boundaries and aggregation methods for attention-imbalance measurements, and Appendix~\ref{app:runtime-impl} details the runtime implementation. Appendix~\ref{app:fetch-proof} states the assumptions and provides the proof of \textit{OEWF}'s conditional non-degradation result. Appendix~\ref{app:estimate-accuracy} reports the accuracy of the runtime estimates used for donor selection.

%% file: sections/ai_statement.tex
\subsection*{AI Use Statement}
We used generative AI tools as assistants throughout this work, under author supervision. First, we used LLM assistance to aid and polish the writing; the authors wrote and verified every technical claim, number, and conclusion. Second, we used LLM-based coding assistants for research execution, to help implement parts of the serving system and the experiment scripts and to help debug and run experiments; the authors reviewed the resulting code and validated all reported results. We also used LLM assistance for retrieval and discovery, including finding related work; the authors checked the suggested literature and verified every citation. We did not use generative AI to generate synthetic datasets, to prove mathematical claims, or to fabricate data, citations, or experimental results.

%% file: appendices/dep_prefill.tex
\section{Long-Tail Statistics of DEP Prefill}
\label{app:dep-prefill}
We profile synchronous DEP prefill with DeepSeek-V4-Flash on the eight-GPU prefill instance described in \secref{sec:setup}. Mooncake's Tool\&Agent trace and AgentX are evaluated separately. We measure cross-replica differences in attention-stage duration and in the times at which replicas enter token dispatch.

\paragraph{Timing and readiness gaps.}
We use NVIDIA Nsight Systems to extract attention-stage durations and GPU-side dispatch-arrival timestamps from per-replica execution timelines. SGLang's scheduler coordinates each model forward through a DP metadata all-gather; we match layer-forward invocations across replicas by the coordinated forward index and layer index. This coordination identifies the same execution round but does not imply simultaneous attention starts across GPUs. In the synchronous DeepEP V2 intranode path used for profiling, replicas enter dispatch independently after local attention, routing, and input preparation. For layer-forward invocation $i$, we define $r_{d,i}$ as the GPU start timestamp of replica $d$'s \texttt{dispatch\_impl} kernel. The kernel performs cross-rank coordination internally, so its start timestamp precedes that coordination rather than marking the start of payload transfer \citep{deepep2025}. We define the cross-replica readiness spread as $s_i=\max_d r_{d,i}-\min_d r_{d,i}$ and each replica's readiness gap relative to the last-arriving replica as $w_{d,i}=\max_j r_{j,i}-r_{d,i}$. We use $w_{d,i}$ to measure replica idle time while waiting for other replicas to reach dispatch: the interval from replica $d$'s entry into dispatch until the last replica enters dispatch, during which synchronous FFN computation cannot begin. This measures waiting for replica arrival, rather than the full dispatch duration.

\paragraph{Normalization and aggregation.}
Let $T_{\mathrm{E2E}}$ denote the wall-clock interval from issuance of the first measured request until the prefill instance produces the first token for the last measured request to complete. Let $\mathcal I$ denote the measured layer-forward invocations within this same interval. We define the long-tail ratio as
\begin{equation}
R_{\mathrm{tail}}=\frac{\sum_{i\in\mathcal I}\sum_{d=1}^{D}w_{d,i}}{D T_{\mathrm{E2E}}},
\label{eq:readiness-gap-ratio}
\end{equation}
where $D=8$ is the number of attention replicas. This ratio normalizes the accumulated idle time while waiting for other replicas to reach dispatch, averaged across replicas, by the run's elapsed time. We compute it separately for each workload.


The long-tail ratio is 28.00\% on Mooncake's Tool\&Agent trace and 28.86\% on AgentX. On both workloads, idle time while waiting for other replicas to reach dispatch therefore averages more than one quarter of the end-to-end runtime per replica.


\paragraph{DP attention load balance.}
The attention stage includes query and key--value projections, attention-related normalization and positional encoding, local KV-cache updates, core attention computation, and output projection. For DeepSeek-V4-Flash, it also includes KV compression and, where applicable, sparse indexing \citep{deepseek2026v4}.

For each layer-forward invocation $i$, let $a_{d,i}$ denote the attention-stage duration of replica $d$, measured using identical stage boundaries across replicas. We define attention load balance (ALB) as
\begin{equation}
\mathrm{ALB}_{i}
=\frac{D^{-1}\sum_{d=1}^{D}a_{d,i}}
{\max_{1\le d\le D}a_{d,i}}\times100\%.
\label{eq:dp-attention-balance}
\end{equation}
Higher ALB indicates more balanced attention-stage execution, with 100\% corresponding to equal durations across replicas. For Figure~\figref{fig:motivation}, we average the per-invocation ALB scores, excluding invocations in which all replicas have zero attention-stage duration. ALB measures relative imbalance within each invocation, whereas the long-tail ratio measures accumulated readiness gaps relative to end-to-end runtime.

%% file: appendices/dsv4_attention.tex
\section{Attention Execution Characteristics}
\label{app:profiles}
Two attention replicas can receive the same number of requests but different amounts of prefill work, because requests can differ in prompt length and reusable KV-cache prefixes. We compare attention-stage execution times at a fixed total token count $N=BL$ across batch sizes $B$ and input lengths $L$, using attention-module profiles of DeepSeek-V4-Flash, GLM-5.3, and DeepSeek-V3-0324 \citep{deepseek2024v3}. DeepSeek-V3-0324 is included only in this comparison, not in the end-to-end evaluation.

\paragraph{Token count and context length.}
At fixed model dimensions, projection GEMM FLOPs scale with the number of processed tokens $N$. Context-dependent attention work need not follow the same scaling. For $B$ equal-length sequences of length $L$ with no reusable prefix, dense causal attention contains
\begin{equation}
P(B,L)=\frac{B L(L+1)}{2}=\frac{N(L+1)}{2},\qquad N=B L,
\label{eq:attention-pair-count}
\end{equation}
query--key pairs. At fixed head dimensions, attention-score and value-aggregation FLOPs scale with $P(B,L)$. Thus, at the same total token count, fewer long sequences require more dense-attention arithmetic than more short sequences. Sparse selection and KV compression reduce the context-dependent work. These reductions can make token-proportional operations more prominent without making the full attention stage independent of batch composition.

\paragraph{Attention timing distributions.}
DeepSeek-V4-Flash, DeepSeek-V3-0324 and GLM-5.3 are profiled as complete attention-module calls on eight B200 GPUs. The measurements use no prefix-cache reuse. Each panel of Figure~\figref{fig:attention-stage-boxplots} is normalized by its largest displayed mean time. Box spread reflects differences across configurations rather than run-to-run variability. At a fixed total token count, DeepSeek-V3-0324 and GLM-5.3 exhibit substantial execution-time variation across batch-size and input-length combinations, whereas DeepSeek-V4-Flash shows a smaller relative spread in the displayed profiles. FLOPs explain the workload dependence but do not directly determine execution time. Total token count is therefore a useful workload indicator, but does not guarantee equal attention-stage execution times across different batch compositions.

\begin{figure}[htbp]
\centering
\includegraphics[width=\linewidth]{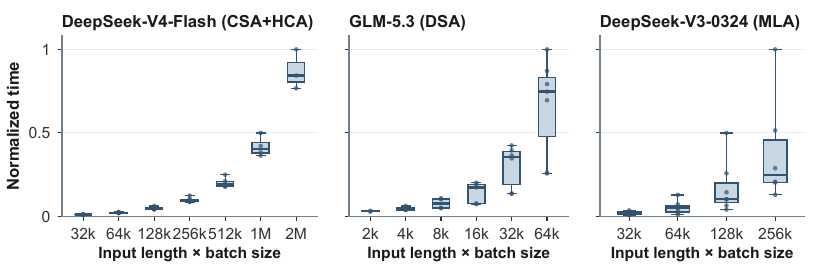}
\caption{Attention execution times across batch-size and input-length configurations at fixed total token counts, without prefix-cache reuse. Each panel is normalized by its largest displayed time.}
\label{fig:attention-stage-boxplots}
\end{figure}

\paragraph{Online batching and cache locality.}
Request placement selects a replica, whereas the local scheduler forms prefill batches from queued requests under token and memory budgets \citep{zheng2024sglang}. A request's batch is therefore not fixed by placement alone. Cache locality further constrains the choice: assigning a request to a less-loaded replica that lacks its reusable prefix requires additional KV recomputation or transfer. Cache-aware policies account for this trade-off \citep{srivatsa2025preble}, but must do so as queues and cache contents change. These constraints make it difficult to consistently balance attention-stage execution times across replicas through request placement alone.

\paragraph{Estimating the remaining attention time.}
With the replica placement fixed by the scheduling policy and the cache hits known, we estimate the remaining attention time. The remaining attention-stage time in Equation~\ref{eq:attention-remaining} is
\begin{equation*}
A_{d}
= \bigl[f(I_d, P_d)
- t_{d}^{\mathrm{att}}\bigr],
\end{equation*}
where $f(I_d, P_d)$ is the predicted duration of replica $d$'s complete attention stage, including projections, KV compression and sparse indexing where applicable, core attention computation, and the output projection, and $t_{d}^{\mathrm{att}}$ is the elapsed attention time. If $f(I_d,P_d)-t_{d}^{\mathrm{att}}$ is negative, we set $A_d$ to $0$. Replica $d$ is then excluded from the candidate donors. Let $\mathcal R_d$ denote the requests in replica $d$'s current batch, so that $I_d=\{I_i\}_{i\in\mathcal R_d}$ and $P_d=\{P_i\}_{i\in\mathcal R_d}$. Request $i$ computes $q_i=I_i-P_i$ new tokens, and its $j$-th new token attends to at most $\min(k_i,P_i+j)$ keys, where $k_i$ is the per-query key budget of sparse selection and $k_i=\infty$ for dense attention. Following prefill cost models that aggregate per-request token counts and squared lengths \citep{zhong2024distserve} and account for prefix-cache hits \citep{qin2025mooncake}, we accumulate two workload features over the requests in the batch:
\begin{equation}
Q_d=\sum_{i\in\mathcal R_d} q_i,
\qquad
\Pi_d=\sum_{i\in\mathcal R_d}\sum_{j=1}^{q_i}\min\bigl(k_i,\,P_i+j\bigr),
\label{eq:attention-features}
\end{equation}
and predict the attention-stage duration as
\begin{equation}
f(I_d,P_d)=\alpha_0+\alpha_1 Q_d+\alpha_2\Pi_d.
\label{eq:attention-cost-model}
\end{equation}
$Q_d$ accounts for token-proportional operations such as the projections, and $\Pi_d$ counts the query--key pairs of core attention. For $B$ equal-length requests of length $L$ with dense attention and no reusable prefix, $Q_d=N$ and $\Pi_d=P(B,L)$ in Equation~\ref{eq:attention-pair-count}; $\Pi_d$ generalizes this pair count to heterogeneous lengths, matched prefixes, and sparse selection. Because both features are summed over requests, length variation within a batch is retained rather than averaged away. For each deployed model, GPU, and attention configuration, we fit $\alpha_0$, $\alpha_1$, and $\alpha_2$ to offline profiles by weighted least squares that minimizes relative error. Layers with the same attention configuration share one set of coefficients; in DeepSeek-V4-Flash, for example, layers with different KV-compression settings are fitted separately. The profiles cover batch sizes $B\in\{1,2,4,8,16,32,64\}$ and input lengths $I\in\{\text{1K},\text{2K},\text{4K},\text{8K},\text{16K},\text{32K},\text{64K}\}$ tokens, omitting infeasible combinations that exceed GPU memory. For these equal-length profiles, the matched prefix length is $P_i=rI$, with hit ratio $r\in\{0\%,20\%,40\%,60\%,70\%,80\%,90\%,95\%\}$. The linear form does not model kernel tiling or wave-quantization effects; Appendix~\ref{app:estimate-accuracy} reports the measured error.

%% file: appendices/runtime_implementation.tex
\section{Runtime Implementation Details}
\label{app:runtime-impl}

\paragraph{Asynchronous dispatch.}
Our DeepEP extension maintains a separate communication lane for each source--destination pair. At the destination, a lane becomes ready after the source's routed activations and associated metadata have been received and packed into expert-major order. A generation-tagged completion signal identifies the corresponding dispatch instance, while system-scope release/acquire ordering ensures that the data are visible before consumption. The SGLang adapter uses CUDA stream waits on these signals, allowing inputs from one attention replica to become eligible for computation without waiting for the other replicas. Readiness does not immediately trigger a grouped FFN invocation: \textit{streamFFN} accumulates ready work and applies the launch threshold and final-flush rule described in \secref{sec:sgffn}.

\paragraph{Grouped FFN invocations.}
For each grouped FFN invocation, a GPU-side Triton layout transform combines the selected rows from ready source lanes into contiguous groups by expert. It computes per-expert row offsets from the selected token counts, aligns each group to the GEMM tile size, and packs the activations together with their quantization metadata. This allows rows for the same expert to share one compute group even when they arrive through different source lanes. An inverse map preserves each row's source replica, token index, and routing slot for returning and combining the outputs. Within an invocation, the two expert GEMMs and the intervening activation and quantization operations execute in stream order. Independent invocations use separate CUDA streams and scratch buffers, preserving these dependencies while permitting concurrent execution.

\paragraph{One-shot weight fetching.}
\textit{OEWF} uses a preallocated temporary weight buffer on each helper to store the fetched expert matrices and quantization metadata. A per-rank, per-layer-forward flag permits one fetch decision after local dispatch completes. Following the selection rule in \secref{sec:swf}, the coordinator chooses the eligible donor with the latest predicted FFN completion time and fixes the expert and token sets within a byte budget bounded by the buffer's available capacity. The implementation considers only tokens dispatched by the helper whose inputs it retains locally. Before issuing asynchronous copies, the coordinator validates shared reservations and reserves the protected transfer service required by the design. The selected experts' weights are copied in full from the donor into this buffer. During prefetching, the selected tokens remain in the donor's workload and retain their original launch eligibility; the donor does not wait for the copies. Weight readiness triggers a separate FFN reassignment decision. Empty selections, reservation failures, failed copies, and rejected FFN reassignments do not trigger another fetch selection.

\paragraph{FFN reassignment commit and fallback.}
Once weights are ready, the coordinator considers tokens recorded in the helper's fetch plan. A token is eligible only if its donor batch has a fixed token set and its FFN execution has not been submitted. It evaluates the admission test in Equation~\ref{eq:swf-admission} against the latest committed plan, including earlier helper commitments, and requires a protected interval that accommodates the complete helper invocation without delaying existing work. A version-checked atomic commit revalidates the plan, invocation state, and token ownership, then transfers execution ownership and updates the donor's residual token counts and the helper's reservation together. If validation fails, the donor has already submitted the work, or another helper has claimed it, the attempted FFN reassignment is abandoned and execution remains with the current owner. Successfully reassigned tokens execute on their retained local inputs at the helper. The donor retains its original launch eligibility even if reassignment reduces its remaining work below the launch threshold. The non-degradation guarantee remains conditional on the timing bounds and resource protections stated in Appendix~\ref{app:fetch-proof}.

\paragraph{Output return and combine.}
Expert outputs are returned asynchronously to their originating attention replicas, or passed directly to local combine when computed there. Source-token and routing-slot metadata associate each contribution with the corresponding token and selected expert. Each routed-expert contribution is weighted by its original routing coefficient exactly once. Combine returns and accumulates these contributions at the originating attention replica, and the result is added to the shared-expert output. The runtime tracks output readiness separately from buffer reclamation: buffers remain live until their dependent transfers and computations have completed. Although contributions can arrive and be accumulated independently, \name{} retains the common layer boundary. The next layer is admitted only after every attention replica has received its required MoE outputs.

%% file: appendices/estimate_accuracy.tex
\section{Runtime Estimates and Their Accuracy}
\label{app:estimate-accuracy}
Coordinated greedy expert-weight selection (\secref{sec:swf}) ranks candidate donors using two predicted quantities: the remaining attention-stage time $A_{d}$ in Equation~\ref{eq:attention-remaining} and the received expert-token count $N_{d}$ in Equation~\ref{eq:donor-completion}. Appendix~\ref{app:profiles} describes the attention-time estimator. This appendix describes how the received expert-token count is estimated and reports the accuracy of the complete attention-stage duration used to form $A_{d}$, together with the accuracy of $N_{d}$. The non-degradation analysis in Appendix~\ref{app:fetch-proof} assumes accurate timing; these measurements quantify the estimation error observed in practice and do not extend that guarantee.

\paragraph{Estimating received expert-token counts.}
Let $N_{\ell,d}$ denote the number of expert tokens that rank $d$ receives in MoE layer $\ell$. When the current layer's count is not yet known, we estimate it with the count that the same rank received in the preceding MoE layer:
\begin{equation}
N_d=N_{\ell-1,d}.
\label{eq:recv-token-estimate}
\end{equation}
Within a forward pass, every MoE layer routes the same tokens, each to the same number of experts, so the total number of token--expert pairs is identical across layers; only its distribution across ranks changes with routing. The estimate therefore approximates the current layer's per-rank total with that of the preceding layer. The first MoE layer has no preceding layer, and \textit{OEWF} skips it.

\paragraph{Error metric.}
For each sample, we measure the gap between the predicted value $\hat y$ and the measured value $y>0$ relative to the measured value, $\epsilon=|\hat y-y|/y$, and report accuracy as $1-\bar\epsilon$, where $\bar\epsilon$ is the mean of $\epsilon$ over all samples. Both estimates are validated online while serving DeepSeek-V4-Flash, DeepSeek-V4-Pro, and GLM-5.3 on the Tool\&Agent and AgentX workloads at the request rates in Figure~\figref{fig:e2e}.

\paragraph{Attention-stage time.}
For each replica and MoE layer, we record the input lengths and matched prefix lengths of the current batch together with the measured attention-stage duration, using the stage boundary defined for $f$ in Appendix~\ref{app:profiles}. The accuracy of 90.3\% compares $f(I_d,P_d)$ with this measured complete attention-stage duration.

\paragraph{Received expert-token count.}
For each rank and layer, we compare $N_d$ from Equation~\ref{eq:recv-token-estimate} with the number of expert tokens that rank $d$ actually receives in layer $\ell$. Because \textit{OEWF} skips the first MoE layer, these samples start from the second MoE layer. The estimate achieves an accuracy of 85.9\%.

%% file: appendices/analysis.tex
\section{Non-degradation of Opportunistic Expert Weight Fetching}
\label{app:analysis}
\label{app:fetch-proof}

This appendix shows how the FFN reassignment rule in \secref{sec:swf} gives non-increasing MoE layer completion time relative to execution without weight fetching, under the assumptions stated below. We first consider a donor--helper pair, then extend its completion-time comparison to all ranks and successive plan updates.

\begin{proposition}[From two ranks to any number of ranks]
\label{prop:fetch-certified}
Under the conditions stated in the proof, sequential updates satisfying Equation~\ref{eq:swf-admission} give $M_{\rm OEWF}\le M_{\rm no\text{-}fetch}$ for any number of ranks.
\end{proposition}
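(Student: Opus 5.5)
We prove this by induction on the sequence of committed plan updates. Define the layer completion time of a plan $S$ as $M(S)=\max_{r}T_r(S)$ over all ranks $r$. The no-fetch execution corresponds to the initial plan $S_0$, so $M_{\rm no\text{-}fetch}=M(S_0)$. Every committed \textit{OEWF} reassignment produces a new plan $S_{k+1}$ from $S_k$. Under the stated assumptions, the realized layer time under \textit{OEWF} is $M(S_K)$ for the last committed plan $S_K$. The assumptions we rely on are:
\begin{itemize}[leftmargin=*,labelindent=0pt,itemindent=0pt,listparindent=0pt]
\item the predicted times $T_r$ equal the realized ones (accurate timing);
\item weight fetches proceed asynchronously and do not delay any rank's computation, because the protected transfer service and reservations are in place;
\item uncommitted or rejected reassignments leave the plan unchanged;
\item each commit is atomic against the latest committed plan, as described in Appendix~\ref{app:runtime-impl}.
\end{itemize}
It then suffices to show that $M(S_{k+1})\le M(S_k)$ for every $k$. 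Telescoping gives $M_{\rm OEWF}=M(S_K)\le M(S_0)=M_{\rm no\text{-}fetch}$.

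\textbf{Two-rank step.} Fix one update that moves $N_{\rm trans}$ unstarted tokens from donor $d$ to helper $h$. By Equation~\ref{eq:swf-completion-times}, the only completion times that change are $T_d$ and $T_h$. For every other rank $r\notin\{d,h\}$ we must check that $T_r(S_{k+1})=T_r(S_k)$. This holds for three reasons:
\begin{itemize}[leftmargin=*,labelindent=0pt,itemindent=0pt,listparindent=0pt]
\item the reassigned tokens were dispatched by $h$ and are retained locally at $h$, so no new dispatch traffic reaches $r$;
\item the fetched weights use a protected transfer that does not slow $r$;
\item the combine of the reassigned outputs returns to $h$ itself, which is the originating replica.
\end{itemize}
The admission test, Equation~\ref{eq:swf-admission}, gives $\max\{T_d(S_{k+1}),T_h(S_{k+1})\}\le\max\{T_d(S_k),T_h(S_k)\}$. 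Combining the two facts,
\[
M(S_{k+1})=\max\Bigl\{\max_{r\ne d,h}T_r(S_k),\,\max\{T_d(S_{k+1}),T_h(S_{k+1})\}\Bigr\}\le M(S_k).
\]

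\textbf{Sequential updates.} When several helpers act, the updates still form a linear sequence of plan changes. This is because each admission test in Appendix~\ref{app:runtime-impl} is evaluated against the latest committed plan, which already includes earlier helper commitments, and the version-checked atomic commit serializes the changes. Consequently, each helper's $T_h(S_k)$ already accounts for any work it took on in previous commits. The same holds when a donor loses tokens more than once, since $N_d$ is updated to its residual value. So the two-rank step applies verbatim to each link of the chain.

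\textbf{Main obstacle.} The hard part is justifying that $T_h(S_k)+\alpha_hF(N_{\rm trans})$ truly upper-bounds the helper's realized completion time when the reassigned kernel runs on a separate stream alongside earlier commitments. It is likewise necessary that the donor's time really decreases to $F(N_d-N_{\rm trans})$, even though the donor keeps its original launch eligibility and may now run below the launch threshold. I would handle both points by stating them as explicit hypotheses:
\begin{enumerate}[leftmargin=*,labelindent=0pt,itemindent=0pt,listparindent=0pt]
\item the stream schedule used to assign $\alpha_h$ is realized, and the protected interval accommodates the whole helper invocation without delaying existing work;
\item the FFN latency $F$ is nondecreasing in the token count, so removing tokens never lengthens the donor's FFN.
\end{enumerate}
The result is then conditional on exactly these timing bounds, consistent with the statement's phrase ``under the conditions stated in the proof.''
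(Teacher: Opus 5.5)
\textbf{Gap: last FFN completion versus layer completion.} Your skeleton matches the paper's: induction over committed updates, the two-rank admission step, ranks outside $\{d,h\}$ unaffected, and serialization against the latest joint plan. The problem is the step ``the realized layer time under \textit{OEWF} is $M(S_K)$'' with $M(S)=\max_r T_r(S)$, which none of your listed assumptions supports. The $T_r$ of Equation~\ref{eq:swf-completion-times} are FFN completion times. A layer ends only after every replica has received and combined its expert outputs and the layer barrier has passed. What you actually prove is $T_{\max}(S_K)\le T_{\max}(S_0)$ for the last FFN completion. That does not imply the claim, because the post-FFN tail depends on the plan and the admission test says nothing about it.

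For example, suppose $h$ is the critical rank and the reassigned kernel on $h$ ends exactly at $T_{\max}(S)$. The update is admissible and $T_{\max}$ is unchanged. Yet the layer time grows whenever combining that kernel's outputs takes longer than the tail that followed $T_{\max}(S)$ in the old plan. Reinterpreting $T_r$ as the time at which rank $r$ holds all its outputs does not fix this. Equation~\ref{eq:swf-completion-times}, on which your two-rank step relies, models only attention, dispatch, and FFN time.

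The paper closes this gap by writing $M(S)=t+T_{\max}(S)+L(S)$, where $L(S)$ is the residual return/combine and synchronization tail after the last FFN. It then explicitly assumes $L(S')\le L(S)$, including for helper outputs. You need this hypothesis or an argument for it. Your observation that reassigned outputs are combined locally at $h$ is relevant, but it does not by itself establish it.

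\textbf{Smaller points.} Equation~\ref{eq:swf-completion-times} gives times remaining from the decision instant, and successive commits occur at different instants. So $\max_r T_r(S_k)$ and $\max_r T_r(S_{k+1})$ are comparable only after shifting to a common origin; the paper adds the decision time $t$ and measures $M$ from layer entry. Your hypothesis that $F$ is nondecreasing is unnecessary: with an accurate model, the admission test already bounds $T_d(S')$ by $\max\{T_d(S),T_h(S)\}$. Your hypothesis~1 is what the paper derives by setting $\alpha_h$ from the planned launch time $s_h$, which yields $T_h(S')=\max\{T_h(S),s_h+F(N_{\mathrm{trans}})\}$.
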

\begin{proof}
Fix the layer inputs and routing. At a weight-ready decision, let $T_r(S)$ denote rank $r$'s remaining FFN completion time under the current plan $S$, measured from the same decision instant for both plans. Assume the completion-time model in \secref{sec:swf}, including FFN latencies under concurrent execution and the overlap estimate, is accurate, and weight fetching does not delay the current plan. A reassignment preserves the donor's attention and dispatch times and the completion times of the helper's existing work. Using the same FFN latency function $F$ as Equation~\ref{eq:ffn-throughput-latency}, with $F(0)=0$, transferring $0\le N_{\mathrm{trans}}\le N_d$ unstarted expert tokens gives
\begin{equation}
\begin{aligned}
T_d(S) &= A_d+D_d+F(N_d),\\
T_d(S') &= A_d+D_d+F(N_d-N_{\mathrm{trans}}),\\
T_h(S') &= T_h(S)+\alpha_h F(N_{\mathrm{trans}}).
\end{aligned}
\label{eq:swf-proof-times}
\end{equation}
For $N_{\mathrm{trans}}>0$, let $s_h\in[0,T_h(S)]$ be the planned launch time of the reassigned FFN kernel, measured from this decision. Setting $\alpha_h=\max\{0,s_h+F(N_{\mathrm{trans}})-T_h(S)\}/F(N_{\mathrm{trans}})$ gives $\alpha_h\in[0,1]$ and $T_h(S')=\max\{T_h(S),s_h+F(N_{\mathrm{trans}})\}$, accounting for partial or complete overlap. With no transferred tokens, the plan is unchanged. Equation~\ref{eq:swf-admission} accepts the update only when
\begin{equation}
\max\{T_d(S'),T_h(S')\}
\le\max\{T_d(S),T_h(S)\}.
\label{eq:swf-local-reduction}
\end{equation}
Reassignment does not delay ranks outside $\{d,h\}$. Taking the maximum over all ranks therefore gives
\begin{equation}
T_{\max}(S')=\max_r T_r(S')\le\max_r T_r(S)=T_{\max}(S).
\label{eq:swf-global-reduction}
\end{equation}
Let $M(S)$ be the layer completion time measured from layer entry, $t$ the decision time on this same origin, and $L(S)=M(S)-t-T_{\max}(S)$ the residual return/combine and layer-synchronization tail after the last FFN completes. Assume this tail does not increase, including for helper outputs, so $L(S')\le L(S)$. Then
\begin{equation}
\begin{aligned}
M(S') &= t+T_{\max}(S')+L(S')\\
&\le t+T_{\max}(S)+L(S)=M(S).
\end{aligned}
\label{eq:swf-layer-reduction}
\end{equation}
A rejected update retains $S$. The coordinator applies each accepted update to the latest joint plan, so its token counts, completion times, and overlap estimates include all previous reassignments. Starting from the no-fetch plan $S_0$, whose execution is unaffected by fetching, induction gives
\begin{equation}
M_{\rm OEWF}=M(S_K)\le M(S_{K-1})\le\cdots\le M(S_0)=M_{\rm no\text{-}fetch}.
\label{eq:swf-chain}
\end{equation}
Here $S_K$ is the final \textit{OEWF} plan.
\end{proof}